\documentclass[11pt]{article}
\usepackage{amssymb,latexsym,amsmath,enumerate,verbatim,amsfonts,amsthm}
\usepackage{cite}
\usepackage[ansinew]{inputenc}
\usepackage{graphicx,color,float,caption}
\usepackage{pgfplots}
\usepackage{tikz}
\usetikzlibrary{positioning, shapes.geometric}
\newtheorem{theorem}{Theorem}

\newtheorem{proposition}[theorem]{Proposition}
\newtheorem{lemma}{Lemma}

\begin{document}
\title{Co-positivity of tensors and bounded from below conditions of CP conserving two-Higgs-doublet potential}
\author{Yisheng Song\thanks{Corresponding author. School of Mathematical Sciences, Chongqing Normal University, Chongqing 401331 P.R. China, Email: yisheng.song@cqnu.edu.cn}}
       
\date{}
\maketitle
\abstract{ In this paper, the analytic sufficient and necessary conditions are obtained for the CP conserving two-Higgs-doublet  potential to be bounded from below  by using the co-positivity of  tensors. This is achieved by treating the potential as a  quartic homogeneous polynomial about the moduli of the two Higgs doublet fields,  where the ‘angles’ is described as the misalignment of the two doublets, then solving three minimum problems with respect to the misalignment.   Finally,  the analytic conditions are established with the help of the corresponding theory and methods of higher order tensors.}\\
Keywords: {Co-positivity, Fourth order tensors, Homogeneous polynomial, 2HDM}

\section{Introduction}

  The stability model of  multi-Higgs potential is very noticeable in particle physics community, and such a model itself was first proposed by Lee \cite{L1973} for the two-Higgs-doublet model (for short, 2HDM)  in 1973.  Weinberg \cite{SW} gave a general model of multi-Higgs potential  in 1976.  It was studied in hundreds of papers since then, one of the simplest extensions of the standard model is  the 2HDM. There are many studies on the bounded from below (for short, BFB)   conditions of the 2HDM potential, including CP conservation and CP violation. These results are different kinds of  analytic conditions of the BFB of such a potential, for examples, 2HDM with CP conservation in Refs. \cite{DM,BFIS,CG,I2007,K2016,K2012,K1985}, the most general 2HDM in Refs.\cite{I2007,DM}, 2HDM with CP conservation and CP violation in Ref. \cite{BFLRS,CG,K1985,N2020,NS,KKO,ERS,I2007,GK2005}, 2HDM handled numerically \cite{MMNN} and others references that are no cited here.  For the tree-level metastability bounds of the most general 2HDM see Ref.\cite{IS2015}.  Recently, Bahl et al. \cite{BCCIW} presented an analytically sufficient condition of the BFB of 2HDM with  CP violation. However, there are not a simple analytic sufficient and necessary  conditions even for 2HDM to be bounded from below until now.  In this paper, we will try our best to give a argument technique to solve this problem, and provide a simple analytical expression (Theorem \ref{thm:1}) of the bounded-from-below  for 2HDM with  CP conservation.

   It is well-known that for the 2HDM with explicit CP conservation,  all couplings of the Higgs potential are real \cite{BFLRS,IS2015,L1973,DM}. The scalar potential of such a 2HDM with Higgs doublets $\Phi_1$ and $\Phi_2$ is 
   \begin{equation}\label{eq:VH}\aligned V_H(\Phi_1,\Phi_2)=&V_2(\Phi_1,\Phi_2)+V_4(\Phi_1,\Phi_2)\\
   	V_2(\Phi_1,\Phi_2)=&m_{11}^2|\Phi_1|^2+m_{22}^2|\Phi_2|^2-m_{12}^2(\Phi_1^*\Phi_2+\Phi_2^*\Phi_1)\\
   	V_4(\Phi_1,\Phi_2)=& \Lambda_1|\Phi_1|^4+\Lambda_2|\Phi_2|^4+\Lambda_3|\Phi_1|^2|\Phi_2|^2\\
   	&+\Lambda_4(\Phi_1^*\Phi_2)(\Phi_2^*\Phi_1)+\frac{\Lambda_5}2[(\Phi_1^*\Phi_2)^2+(\Phi_2^*\Phi_1)^2]\\
   	&+\Lambda_6|\Phi_1|^2(\Phi_1^*\Phi_2+\Phi_2^*\Phi_1)\\
   	&  +\Lambda_7|\Phi_2|^2(\Phi_1^*\Phi_2+\Phi_2^*\Phi_1),\endaligned
   \end{equation}
where $\Phi^*$ is Hermitian conjugate of $\Phi$. 
The the BFB  of the scalar potential in the 2HDM  is considered only the non-negativity of the quartic part $V_4$ \cite{BFLRS},  i.e.,
$ V_4(\Phi_1,\Phi_2)\geq0.$

In this paper, with the help of the corresponding theory and methods of higher order tensors, we mainly present the sufficient and necessary conditions of the BFB for the 2HDM potential  with explicit CP conservation. That is, our main result is the following:
\begin{theorem}\label{thm:1} Let $\Lambda_1>0$, $\Lambda_2>0$. Then $V_4(\Phi_1,\Phi_2)\geq0$ if and only if
	$$\aligned (1)\ &\Lambda_6=\Lambda_7=0, \Lambda_3+2\sqrt{\Lambda_1\Lambda_2}\geq0, \Lambda_3+\Lambda_4-|\Lambda_5|+2\sqrt{\Lambda_1\Lambda_2}\geq0;\\
	(2)\	
	& \Delta\geq0,  \Lambda_3+2\sqrt{\Lambda_1\Lambda_2}\geq0, \Lambda_3+\Lambda_4-\Lambda_5+2\sqrt{\Lambda_1\Lambda_2}\geq0,\\ &|\Lambda_6\sqrt{\Lambda_2}-\Lambda_7\sqrt{\Lambda_1}|\leq2\sqrt{\Lambda_1\Lambda_2(\Lambda_3+\Lambda_4+\Lambda_5)+2\Lambda_1\Lambda_2\sqrt{\Lambda_1\Lambda_2}},\\
	&\mbox{(i) }-2\sqrt{\Lambda_1\Lambda_2}\leq \Lambda_3+\Lambda_4+\Lambda_5 \leq6\sqrt{\Lambda_1\Lambda_2}, \\
	&\mbox{(ii) }\Lambda_3+\Lambda_4+\Lambda_5>6\sqrt{\Lambda_1\Lambda_2}\mbox{ and }\\
	&|\Lambda_6\sqrt{\Lambda_2}+\Lambda_7\sqrt{\Lambda_1}| \leq 2\sqrt{\Lambda_1\Lambda_2(\Lambda_3+\Lambda_4+\Lambda_5)-2\Lambda_1\Lambda_2\sqrt{\Lambda_1\Lambda_2}},
	\endaligned$$
	where $\Delta=4(12\Lambda_1\Lambda_2-12\Lambda_6\Lambda_7+(\Lambda_3+\Lambda_4+\Lambda_5)^{2})^{3}-(72\Lambda_1\Lambda_2(\Lambda_3+\Lambda_4+\Lambda_5)+36\Lambda_6\Lambda_7(\Lambda_3+\Lambda_4+\Lambda_5)-2(\Lambda_3+\Lambda_4+\Lambda_5)^{3}-108\Lambda_1\Lambda_7^{2}-108\Lambda_6^{2}\Lambda_2)^{2}$.
\end{theorem}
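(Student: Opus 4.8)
\noindent\emph{Proof plan.} I would argue in three stages: reduce the invariant inequality $V_{4}\ge 0$ to a polynomial inequality in essentially two real variables; carry out an elementary one–variable minimisation; and identify the outcome with the displayed list.

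\medskip
\noindent\textbf{Stage 1 (reduction).} Since $V_{4}$ is built only from the gauge invariants $a=|\Phi_{1}|^{2}$, $b=|\Phi_{2}|^{2}$, $c=\Phi_{1}^{*}\Phi_{2}$, and since $(\Phi_{1},\Phi_{2})\mapsto(a,b,c)$ maps onto $\{a,b\ge 0,\ |c|^{2}\le ab\}$ (Cauchy--Schwarz, plus an explicit preimage in $\mathbb{C}^{2}$), writing $c=p+\mathrm{i}q$ turns ``$V_{4}\ge 0$ everywhere'' into
\[\Lambda_{1}a^{2}+\Lambda_{2}b^{2}+\Lambda_{3}ab+(\Lambda_{4}+\Lambda_{5})p^{2}+(\Lambda_{4}-\Lambda_{5})q^{2}+2(\Lambda_{6}a+\Lambda_{7}b)p\ \ge\ 0\]
for all $a,b\ge 0$ with $p^{2}+q^{2}\le ab$. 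Because $q$ enters only through $(\Lambda_{4}-\Lambda_{5})q^{2}$, I would eliminate it ($q=0$ if $\Lambda_{4}\ge\Lambda_{5}$, else $q^{2}=ab-p^{2}$), obtaining in both cases $\Lambda_{1}a^{2}+\Lambda_{2}b^{2}+\gamma\, ab+\delta\, p^{2}+2(\Lambda_{6}a+\Lambda_{7}b)p\ge 0$ on $\{a,b\ge 0,\ p^{2}\le ab\}$ with $\gamma+\delta=\Lambda_{3}+\Lambda_{4}+\Lambda_{5}=:M$. Putting $a=r^{2}$, $b=t^{2}$, $p=rt\sigma$ ($\sigma\in[-1,1]$; the cases $r=0$ or $t=0$ are trivial as $\Lambda_{1},\Lambda_{2}>0$) and scaling out $t^{4}$ with $u=r/t\ge 0$, the claim becomes $g(u,\sigma):=\Lambda_{1}u^{4}+2\Lambda_{6}u^{3}\sigma+(\gamma+\delta\sigma^{2})u^{2}+2\Lambda_{7}u\sigma+\Lambda_{2}\ge 0$ for all $u\ge 0$, $\sigma\in[-1,1]$.

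\medskip
\noindent\textbf{Stage 2 (minimising in $\sigma$).} For fixed $u>0$, $g(u,\cdot)$ is quadratic with leading coefficient $\delta u^{2}$. If $\delta\le 0$ it is concave, so its minimum on $[-1,1]$ sits at $\sigma=\pm 1$, which (using $\gamma+\delta=M$) says exactly that the binary quartic $q_{*}(x):=\Lambda_{1}x^{4}+2\Lambda_{6}x^{3}+Mx^{2}+2\Lambda_{7}x+\Lambda_{2}$ is nonnegative for all real $x$ (take $x=\pm u$). If $\delta>0$ it is convex, with minimiser $\sigma^{*}(u)=-(\Lambda_{6}u^{2}+\Lambda_{7})/(\delta u)$; this produces a ``tangency'' inequality $(\delta\Lambda_{1}-\Lambda_{6}^{2})v^{2}+(\delta\gamma-2\Lambda_{6}\Lambda_{7})v+(\delta\Lambda_{2}-\Lambda_{7}^{2})\ge 0$ ($v=u^{2}$) on the set where $\sigma^{*}(u)\in[-1,1]$, and again $q_{*}(\pm u)\ge 0$ off that set. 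Two inequalities are necessary regardless: $p=q=0$ in the Stage 1 form gives $\Lambda_{3}+2\sqrt{\Lambda_{1}\Lambda_{2}}\ge 0$, and $g(u,1)+g(u,-1)=2(\Lambda_{1}u^{4}+Mu^{2}+\Lambda_{2})$ gives $M+2\sqrt{\Lambda_{1}\Lambda_{2}}\ge 0$ (which is in any case built into the radicands below).

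\medskip
\noindent\textbf{Stage 3 (assembly).} A direct computation shows that the quantity $\Delta$ in the theorem equals $4\Delta_{0}^{3}-\Delta_{1}^{2}$, i.e.\ $27$ times the discriminant of $q_{*}$, where $\Delta_{0}=M^{2}-12\Lambda_{6}\Lambda_{7}+12\Lambda_{1}\Lambda_{2}$ is the standard quartic invariant; hence $\Delta\ge 0$ is precisely ``$q_{*}$ has no simple real root''. To remove the residual obstruction (four distinct real roots, a triple root, or a double root with negative quadratic cofactor), I would pass to the sum–of–two–squares form $q_{*}(x)=(\sqrt{\Lambda_{1}}\,x^{2}+\mu x+\varepsilon\sqrt{\Lambda_{2}})^{2}+(\text{linear})^{2}$ with $\varepsilon=\pm 1$: the two signs yield $|\Lambda_{6}\sqrt{\Lambda_{2}}\mp\Lambda_{7}\sqrt{\Lambda_{1}}|\le 2\sqrt{\Lambda_{1}\Lambda_{2}(M\pm 2\sqrt{\Lambda_{1}\Lambda_{2}})}$, and a short analysis of the double–root case shows the ``$+$'' inequality is implied by $\Delta\ge 0$ together with the ``$-$'' one exactly when $M\le 6\sqrt{\Lambda_{1}\Lambda_{2}}$, while both are needed for $M>6\sqrt{\Lambda_{1}\Lambda_{2}}$ — this is the dichotomy (i)/(ii). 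I would then check that the Stage 2 tangency inequalities are subsumed: on their range of validity the constrained $\sigma$–minimum of $g$ lies between the endpoint values $q_{*}(\pm u)$ in a way already controlled by the two norm inequalities. Case (1), $\Lambda_{6}=\Lambda_{7}=0$, is the classical computation: $(\Lambda_{4}+\Lambda_{5})p^{2}+(\Lambda_{4}-\Lambda_{5})q^{2}$ has eigenvalues $\Lambda_{4}\pm\Lambda_{5}$, so its minimum on $\{p^{2}+q^{2}\le ab\}$ contributes $\min\{0,\Lambda_{4}-|\Lambda_{5}|\}\,ab$, leaving $\Lambda_{3}+2\sqrt{\Lambda_{1}\Lambda_{2}}\ge 0$ and $\Lambda_{3}+\Lambda_{4}-|\Lambda_{5}|+2\sqrt{\Lambda_{1}\Lambda_{2}}\ge 0$.

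\medskip
\noindent\emph{Main obstacle.} The delicate part is Stage 3: showing that the conditions produced by the various sign regimes of $\delta$ and by the position of $\sigma^{*}(u)$ — several of which a priori involve $\Lambda_{3},\Lambda_{4},\Lambda_{5}$ separately — collapse to conditions in the single combination $M=\Lambda_{3}+\Lambda_{4}+\Lambda_{5}$, and pinning down the threshold $M=6\sqrt{\Lambda_{1}\Lambda_{2}}$ at which the second norm inequality becomes active. I expect this to come down to the explicit discriminant identity $\Delta=4\Delta_{0}^{3}-\Delta_{1}^{2}$ together with a careful, but elementary, analysis of the positive–semidefinite binary quartic $q_{*}$.
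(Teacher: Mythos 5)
Your architecture is essentially the paper's: your orbit variables $(a,b,p,q)$ with $p^{2}+q^{2}\le ab$ are exactly its $(\phi_1,\phi_2,\rho,\theta)$ via $p=\phi_1\phi_2\rho\cos\theta$, $q=\phi_1\phi_2\rho\sin\theta$; you then do nested one-variable quadratic minimisations (in the order $q$, then $\sigma$, rather than the paper's $\cos\theta$, then $\rho$ --- an immaterial difference); and the endpoint values produce the same binary quartic $q_*(x)=\Lambda_1x^4+2\Lambda_6x^3+Mx^2+2\Lambda_7x+\Lambda_2$, whose positivity the paper settles by quoting Lemma~\ref{lem:1} while you propose to rederive the Ulrich--Watson conditions; your identifications of $\Delta$ with the quartic discriminant and of the two norm inequalities with the two sum-of-squares decompositions are consistent with that lemma.

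The genuine gap is the disposal of the tangency inequality in Stage 3. In the convex regime $\delta>0$, whenever $\sigma^{*}(u)\in(-1,1)$ the constrained minimum of $g(u,\cdot)$ is the interior critical value, which is \emph{less than or equal to both} endpoint values $q_*(\pm u)$; it does not ``lie between'' them, so your stated reason for discarding the condition $(\delta\Lambda_1-\Lambda_6^{2})v^{2}+(\delta\gamma-2\Lambda_6\Lambda_7)v+(\delta\Lambda_2-\Lambda_7^{2})\ge0$ is false. That inequality must be imposed pointwise on the activity region $\{u:\ \sigma^{*}(u)\in[-1,1]\}$, it is strictly stronger there than $q_*(\pm u)\ge0$, and it is precisely the place where conditions involving $\gamma$ and $\delta$ separately (hence $\Lambda_3$, $\Lambda_4$, $\Lambda_5$ individually, not just $M$) threaten to survive. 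You must either prove it is implied by the listed conditions on that region, or show --- as the paper does through the $2\times2$ copositivity arguments of Propositions~\ref{pro:3} and~\ref{pro:5} --- that the interior-critical-point scenario, required for all admissible field values, forces $\Lambda_6=\Lambda_7=0$ and thereby degenerates into the separate case (1) with the combination $\Lambda_3+\Lambda_4-|\Lambda_5|$. As written, your proposal offers no argument for this step (and indeed flags it as the ``main obstacle''), so the equivalence asserted in the theorem, and in particular the appearance of case (1) as a distinct alternative, is not established.
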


\section{2HDM potential  and Real tensors}

In order to show our main result, we need turn the polynomal $V_4(\Phi_1,\Phi_2)$ about two complex variable   into  a  4th order symmetric real tensor, and then, use  some conclusions of tensors to prove our conclusion.

Let $\phi_k$=$|\Phi_k|$, the modulus of $\Phi_k$ for $k=1,2.$ Then  $$\Phi_1^*\Phi_2=\phi_1\phi_2 \rho e^{i\theta}\mbox{ and }\Phi_2^*\Phi_1=\phi_1\phi_2 \rho e^{-i\theta},$$
here $i^2=-1$ and $\rho\in[0,1]$ is the orbit space parameter \cite{K2016,GK2005}. So, we have
$$\aligned V_4(\Phi_1,\Phi_2)=& \Lambda_1\phi_1^4+\Lambda_2\phi_2^4+\Lambda_3\phi_1^2\phi_2^2+\Lambda_4\rho^2\phi_1^2\phi_2^2\\
&+\frac{\Lambda_5}2(\phi_1^2\phi_2^2\rho^2e^{2i\theta}+\phi_1^2\phi_2^2\rho^2e^{-2i\theta})\\
&+\Lambda_6\phi_1^2(\phi_1\phi_2\rho e^{i\theta}+\phi_1\phi_2\rho e^{-i\theta})\\
&+\Lambda_7\phi_2^2(\phi_1\phi_2\rho e^{i\theta}+\phi_1\phi_2\rho e^{-i\theta})\\
=& \Lambda_1\phi_1^4+\Lambda_2\phi_2^4+\Lambda_3\phi_1^2\phi_2^2+\Lambda_4\rho^2\phi_1^2\phi_2^2\\
&+\Lambda_5\rho^2\phi_1^2\phi_2^2\cos2\theta+2\Lambda_6\phi_1^3\phi_2\rho\cos\theta+2\Lambda_7\phi_2^3\phi_1\rho\cos\theta\\
=& \Lambda_1\phi_1^4+\Lambda_2\phi_2^4+(\Lambda_3+\Lambda_4\rho^2-\Lambda_5\rho^2)\phi_1^2\phi_2^2\\
&+2\Lambda_5\rho^2\phi_1^2\phi_2^2\cos^2\theta+2\Lambda_6\rho\phi_1^3\phi_2\cos\theta+2\Lambda_7\rho\phi_2^3\phi_1\cos\theta.
\endaligned$$
Let $x=\cos\theta$. Then $x\in[-1,1]$  and
\begin{equation}\label{eq:V}
	\aligned
V_4(\Phi_1,\Phi_2)=& \Lambda_1\phi_1^4+\Lambda_2\phi_2^4+(\Lambda_3+\Lambda_4\rho^2-\Lambda_5\rho^2)\phi_1^2\phi_2^2\\
&+2\Lambda_5\rho^2\phi_1^2\phi_2^2 x^2+2\rho(\Lambda_6\phi_1^2 +\Lambda_7\phi_2^2)\phi_1\phi_2 x.
\endaligned\end{equation}
This defines a 4th-order 2-dimensional symmetric tensor $\mathcal{A}(\rho,x)=(a_{ijkl})$ with a parameter $\rho$ and $x$:
\begin{equation}\label{eq:A}
\aligned
	 a_{1111}=&\Lambda_{1}, a_{2222}=\Lambda_{2}, \\
 a_{1122}=&\displaystyle\frac{1}{6}[\Lambda_{3}+\Lambda_{4}\rho^{2}+\Lambda_5\rho^2(2x^2-1)],\\
a_{1112}=&\displaystyle\frac{1}{2}\Lambda_6\rho x,\  a_{1222}=\displaystyle\frac{1}{2}\Lambda_7\rho x.
\endaligned
\end{equation}
So the the BFB of the system $V_H(\Phi_1,\Phi_2)$ may turn into the co-positivity of a 4th-order tensor $\mathcal{A}(\rho,x)$.

The positive definiteness and the co-positivity of a 4th order symmetric tensor are applied to deal with the BFB conditions of the particle physics model in Ref. \cite{K2016}. Recently, In Refs. \cite{S2021,SL2021,SQ2021,SQ2020,LS2021}, the distinctly sufficient conditions were given for the co-positivity of 4th order 3-dimensional symmetric tensors, which may be used to the BFB conditions of scalar potential of the particle physics model.

\section{Co-positivity of Matrices and Tensors}

 The co-positivity of a matrix $M=(m_{ij})$ is used to verify the BFB  of the particle physics model in Refs. \cite{K2016,K2012}.
A symmetric matrix $M=(m_{ij})$ is co-positive if the quadratic form  $x^\top Mx\geq0$ for all non-negative vectors $x\in \mathbb{R}^n$. The co-positivity of a $2\times 2$ symmetric matrix $M=(m_{ij})$ was showed in Ref. \cite[Lemma 2.1]{ACE}, also see  Hadeler \cite[Theorem 2]{H1983} and Nadler \cite[Lemma 1]{N1992} for more details. A $2\times 2$ symmetric matrix $M=(m_{ij})$ is co-positive if and only if  $$m_{11}\geq0,m_{22}\geq0 \mbox{ and } m_{12}+\sqrt{m_{11}m_{22}}\geq0.$$

The co-positivity of a symmetric tensor is tried  to test the BFB of the particle physical model in Ref. \cite{K2016}. A $m$th-order $n$-dimensional symmetric tensor $T=(t_{i_1\cdots i_m}) (i_j=1,2,\ldots,n, j=1,2,\ldots, m)$ is co-positive \cite{Q2005,Q2013,SQ2015,QCC2018,QL2017} if the $m$-degree homogeneous polynomial $T{\bf x}^m\geq0$ for all non-negative vectors ${\bf x}\in \mathbb{R}^n$, where ${\bf x}=(x_1,x_2,\cdots x_n)^\top$ and
$$T{\bf x}^m={\bf x}^\top(T{\bf x}^{m-1})=\sum_{i_1\cdots i_m=1}^nt_{i_1\cdots i_m}x_{i_1}\cdots x_{i_m},$$
$T{\bf x}^{m-1}=(y_1,y_2,\cdots, y_n)^\top$ is a vector with its entiries
$$y_k=(T{\bf x}^{m-1})_k=\sum_{i_2\cdots i_m=1}^nt_{ki_2\cdots i_m} x_{i_2}\cdots x_{i_m}.$$

 Let $f(x,y)$ be a quartic homogeneous real polynomial about two variables $x,y$,
	\begin{equation}\label{eq:f}
		f(x,y)=a_0 x^4+a_1x^3y+a_2x^2y^2+a_3xy^3+a_4y^4.
	\end{equation}
Then it gives a 4th-order 2-dimensional symmetric tensor $T=(t_{ijkl})$ with its entires,
\begin{flushleft}
	\qquad$t_{1111}=a_0,$ $t_{2222}=a_4,$  $t_{1122}=\displaystyle\frac{1}{6}a_2,$
	$t_{1112}=\displaystyle\frac{1}{4}a_1,$ $t_{1222}=\displaystyle\frac{1}{4}a_3$,
\end{flushleft}
Assume that $a_0>0$ and $a_4>0$. In Ref. \cite{SL2021}, the co-positivity of the above tensor $T$ was proved (also see Refs. \cite{UW,QYZ}).
\begin{lemma}\cite[Lemma 3.1]{SL2021}\label{lem:1} Let $a_0>0$ and $a_4>0$. Then $f(x,y)\geq0$ for all $x\geq0,\ y\geq0$ if and only if
	\begin{flushleft}
		\quad(1) $\Delta\leq0$ and $a_1\sqrt{a_4}+a_3\sqrt{a_0}>0;$ or
	\end{flushleft}
	\begin{flushleft}
		\quad(2) $a_1\geq0$, $a_3\geq0$ and $2\sqrt{a_0a_4}+a_2\geq0;$ or
	\end{flushleft}
	\begin{flushleft}
		\quad(3) $\Delta\geq0$, $|a_1\sqrt{a_4}-a_3\sqrt{a_0}|\leq4\sqrt{a_0a_2a_4+2a_0a_4\sqrt{a_0a_4}}$ and
	\end{flushleft}
	\begin{flushleft}
		\qquad(i) $-2\sqrt{a_0a_4}\leq a_2 \leq6\sqrt{a_0a_4}$,
	\end{flushleft}
	\begin{flushleft}
		\qquad(ii) $a_2>6\sqrt{a_0a_4}$ and \end{flushleft}
	\begin{flushleft}
		\qquad $a_1\sqrt{a_4}+a_3\sqrt{a_0} \geq -4\sqrt{a_0a_2a_4-2a_0a_4\sqrt{a_0a_4}}$,
	\end{flushleft}
	where $\Delta=4(12a_0a_4-3a_1a_3+a_2^{2})^{3}-(72a_0a_2a_4+9a_1a_2a_3-2a_2^{3}-27a_0a_3^{2}-27a_1^{2}a_4)^{2}$.
	\end{lemma}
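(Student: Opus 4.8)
The plan is to collapse the two-variable non-negativity into a one-variable quartic on a half-line, normalize, exploit the $s\mapsto 1/s$ symmetry through the substitution $w=s+1/s$, and then read the three cases off the sign of the quartic discriminant together with two boundary analyses. Since $f$ is homogeneous of degree four and $a_0,a_4>0$, the boundary rays $y=0$ and $x=0$ give $a_0x^4\ge 0$ and $a_4y^4\ge 0$ automatically; for $x\ge 0,\ y>0$ I set $t=x/y\ge 0$ and write $f(x,y)=y^4 g(t)$ with $g(t)=a_0t^4+a_1t^3+a_2t^2+a_3t+a_4$, so that $f\ge 0$ on the nonnegative quadrant if and only if $g\ge 0$ on $[0,\infty)$. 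Rescaling $t=(a_4/a_0)^{1/4}s$ and dividing by $a_4$ produces the normalized quartic $h(s)=s^4+b_1s^3+b_2s^2+b_3s+1$ with $b_1=a_1/(a_0^3a_4)^{1/4}$, $b_2=a_2/\sqrt{a_0a_4}$, $b_3=a_3/(a_0a_4^3)^{1/4}$. Every hypothesis then becomes clean: $a_1\sqrt{a_4}\pm a_3\sqrt{a_0}=(a_0a_4)^{3/4}(b_1\pm b_3)$, $2\sqrt{a_0a_4}+a_2=\sqrt{a_0a_4}\,(2+b_2)$, and the threshold $6\sqrt{a_0a_4}$ becomes $b_2=6$. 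I would also record the discriminant identity $\Delta=4\Delta_0^3-\Delta_1^2$ with $\Delta_0=a_2^2-3a_1a_3+12a_0a_4$ and $\Delta_1=2a_2^3-9a_1a_2a_3+27a_1^2a_4+27a_0a_3^2-72a_0a_2a_4$, together with $4\Delta_0^3-\Delta_1^2=27a_0^2\,\mathrm{disc}(g)$, so that $\mathrm{sign}\,\Delta=\mathrm{sign}\,\mathrm{disc}(h)$ governs the real-root count of $h$.

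The key reduction is the symmetry $s\mapsto 1/s$, which fixes $b_2$ and swaps $b_1\leftrightarrow b_3$. For $s>0$ put $w=s+1/s\in[2,\infty)$, $c=(b_1+b_3)/2$, $d=|b_1-b_3|/2$, and compute $L(w):=\tfrac12\bigl(h(s)/s^2+s^2h(1/s)\bigr)=w^2+cw+(b_2-2)$ and $\Psi(w):=h(s)\,h(1/s)=L(w)^2-d^2(w^2-4)$. Because $s\mapsto 1/s$ permutes $(0,\infty)$, the statement $h\ge 0$ on $(0,\infty)$ is equivalent to $h(s)\ge 0$ and $h(1/s)\ge 0$ for all $s>0$, which is in turn equivalent to $L\ge 0$ and $\Psi\ge 0$ on $[2,\infty)$. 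Squaring is legitimate here because, once $L\ge 0$ is known, $\Psi\ge 0$ is the same as $L(w)\ge d\sqrt{w^2-4}$.

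Now I split into the three cases. Case (2) is the trivial regime: if $b_1,b_3\ge 0$ and $b_2\ge -2$, then $s^4+1\ge 2s^2$ gives $h(s)\ge(b_2+2)s^2+b_1s^3+b_3s\ge 0$, the AM–GM sufficiency. For the rest, the quadratic constraint $L\ge 0$ on $[2,\infty)$ is controlled by the vertex $w=-c/2$: it enters $[2,\infty)$ exactly when $c<-4$, and since the discriminant threshold $c=-2\sqrt{b_2-2}$ satisfies $-2\sqrt{b_2-2}<-4\iff b_2>6$, this recovers the dichotomy $-2\le b_2\le 6$ (case (i), where $L\ge 0$ reduces to $L(2)=h(1)\ge 0$ and is absorbed by the other conditions) versus $b_2>6$ (case (ii), where $L\ge 0$ becomes $c\ge -2\sqrt{b_2-2}$, i.e. the stated bound $a_1\sqrt{a_4}+a_3\sqrt{a_0}\ge-4\sqrt{a_0a_2a_4-2a_0a_4\sqrt{a_0a_4}}$). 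The remaining requirement $\Psi\ge 0$ on $[2,\infty)$ is read from $\mathrm{disc}(h)$: when $\Delta<0$ the quartic $h$ has exactly two real roots, and non-negativity forces both onto the negative axis, which I would show is exactly $b_1+b_3>0$ (case (1)); when $\Delta\ge 0$, $h$ has four or no real roots, and non-negativity demands that any real roots avoid $[2,\infty)$ under $w=s+1/s$, producing the boundary bound $|b_1-b_3|\le 4\sqrt{b_2+2}$ at $w=2$ (that is, at $s=1$) together with the $b_2$-ranges (case (3)). Finally I would de-normalize back to the $a_i$.

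The main obstacle is the precise matching of these root-configuration statements to the exact algebraic thresholds. Concretely: (a) proving that $b_1+b_3>0$ is \emph{exactly} the criterion that the two real roots (when $\Delta<0$) lie on the negative axis rather than the positive one; (b) deriving $|b_1-b_3|\le 4\sqrt{b_2+2}$ as the sharp boundary condition near $w=2$, via the local expansion $\psi(2+\varepsilon)=L(2)-2d\sqrt{\varepsilon}+(4+c)\varepsilon+o(\varepsilon)$ of $\psi(w)=L(w)-d\sqrt{w^2-4}$; and (c) confirming that cases (1)–(3) are exhaustive and mutually consistent, including the $\Delta=0$ overlaps and the claim that within $-2\le b_2\le 6$ the constraint $L\ge 0$ is genuinely implied by the others. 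All of this rests on the identity $\Delta=27a_0^2\,\mathrm{disc}(g)$ and on the tangency analysis of $\psi$ on $[2,\infty)$; carrying out that calculation cleanly is the crux of the argument.
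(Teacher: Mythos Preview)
The paper does not prove this lemma at all: it is quoted verbatim as \cite[Lemma~3.1]{SL2021} (with Ulrich--Watson \cite{UW} and \cite{QYZ} listed as further sources) and is then used as a black box in the proof of Proposition~\ref{pro:7}. There is therefore no in-paper argument to compare your attempt against.

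That said, your outline is the standard Ulrich--Watson route and is set up correctly. The reduction $h(s)/s^2=L(w)+\tfrac{b_1-b_3}{2}(s-1/s)$, giving the equivalent condition $L(w)\ge d\sqrt{w^2-4}$ on $[2,\infty)$, is exactly the mechanism behind the stated thresholds, and your identification of the two regimes for $L$ via the vertex location (yielding the split at $b_2=6$, i.e.\ $a_2=6\sqrt{a_0a_4}$) is right. The gaps you flag are genuine, however. For~(a), the claim that ``$b_1+b_3>0$ is precisely the criterion that both real roots are negative when $\Delta<0$'' is not a one-line consequence of Vieta; in the original sources it comes from analysing the resolvent cubic (or, equivalently, from locating the unique minimum of $\psi(w)=L(w)-d\sqrt{w^2-4}$ and reading off its sign), and you have not supplied that step. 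For~(b), the inequality $|b_1-b_3|\le 4\sqrt{b_2+2}$ is not a ``boundary condition at $w=2$'' in the naive sense you suggest (at $w=2$ one only gets $L(2)\ge0$); it arises instead from the tangency analysis of $\psi$, and your local expansion sketch does not yet pin down why that particular combination appears. For~(c), exhaustiveness and the handling of $\Delta=0$ require care that you have deferred. So your plan is sound and matches the literature the paper cites, but as written it is an outline rather than a proof; the paper itself simply imports the finished statement.
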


\section{Bounded-from-below  conditions}

In this section, we mainly give the BFB conditions of the 2HDM  with explicit CP conservation. That is, how to logically  establish our main result, Theorem \ref{thm:1}. \\

The quartic part \eqref{eq:V} of such a CP conserving two-Higgs-doublet potential may be rewritten as follow
$$	\aligned
	V_4(\Phi_1,\Phi_2)=&2\Lambda_5\rho^2\phi_1^2\phi_2^2 x^2+2\rho(\Lambda_6\phi_1^2 +\Lambda_7\phi_2^2)\phi_1\phi_2 x\\
	&+\Lambda_1\phi_1^4+\Lambda_2\phi_2^4+(\Lambda_3+\Lambda_4\rho^2-\Lambda_5\rho^2)\phi_1^2\phi_2^2.
	\endaligned$$
If $\phi_1>0,\ \phi_2=0$ (or $\phi_1=0,\ \phi_2>0$), then $V_H(\Phi_1,\Phi_2)=\Lambda_1\phi_1^4$ (or $\Lambda_2\phi_2^4).$ Without loss of generality, we may assume $\Lambda_1>0$, $\Lambda_2>0$, $\phi_1>0,\ \phi_2>0$ in the sequel.
Let \begin{equation}\label{eq:1}
	A=2\Lambda_5\rho^2\phi_1^2\phi_2^2,\ B=2\rho(\Lambda_6\phi_1^2 +\Lambda_7\phi_2^2)\phi_1\phi_2,
\end{equation}
\begin{equation}\label{eq:2}
C=\Lambda_1\phi_1^4+\Lambda_2\phi_2^4+[\Lambda_3+(\Lambda_4-\Lambda_5)\rho^2]\phi_1^2\phi_2^2.
\end{equation}
Then, $V_4(\Phi_1,\Phi_2)$ may be seen as a quadratic function $f(x)$ about a variable $x\in[-1,1]$,
\begin{equation}\label{eq:3}
	f(x)=V_H(\Phi_1,\Phi_2)=Ax^2+Bx+C.
\end{equation}
So, if $A>0$, then as you can see from the image below, the function $f(x)$ attains its minimum $\frac{4AC-B^2}{4A}$ at a point $x=-\frac{B}{2A}\in [-1,1]$ or  attains its minimum $A+C\pm B$ at a point $x=\pm 1$ ($-\frac{B}{2A}\notin [-1,1]$). At that time, the graph-like of $f(x)$ is  as shown below:
\begin{center}\begin{tikzpicture}
	\begin{axis}[
		xlabel={$x$}, ylabel={$y$},
		xmin=-3.85, xmax=5.5,
		ymin=-0.3, ymax=2.9,
		xtick={-2, -1,  0, 1, 2},
		xticklabels={$-2$, $-1$, $0$, $1$, $2$},
		line width=1pt,
		axis lines=center,
		]
		\addplot[smooth,domain=-2:2, red!70]{(x+0.2)^2+0.2};
		\addlegendentry{\tiny $2A>B>-2A $}
		\addplot[smooth,domain=-4:2, blue!70]{(x+2.0)^2-0.2};
		\addlegendentry{\tiny $-\frac{B}{2A}<-1 $}
		\addplot[smooth,domain=-2:4]{(x-1.98)^2+0.22};
		\addlegendentry{\tiny  $-\frac{B}{2A}>1 $}
		\addplot[dashed, domain = 0:4, cyan!70] (-1,{x});
		\addplot[dashed, domain = 0:4, cyan!70] (1,{x});
	\end{axis}
\end{tikzpicture}

Figure 1: Graph of $f(x)$ ($A>0$)
\end{center}
If $A\leq0$, then  $f(x)$ attains its minimum $A-B+C$ or $A+B+C$ at a point $x=-1$ ($B\geq0$) or $x=1$  ($B<0$). At that time, the graph-like of $f(x)$ is  as shown below:
\begin{center}\begin{tikzpicture}
		\begin{axis}[
			xlabel={$x$}, ylabel={$y$},
			xmin=-2.3, xmax=2.3,
			ymin=-0.3, ymax=2.3,
			xtick={-2, -1,  0, 1, 2},
			xticklabels={$-2$, $-1$, $0$, $1$, $2$},
			line width=1pt,
			axis lines=center,
			]
			\addplot[smooth,domain=-4:2, red!70]{-0.9*(x+0.5)^2+2.2};
			\addlegendentry{\tiny $B<0 $}
			\addplot[smooth,domain=-2:4]{-(x-0.3)^2+2};
			\addlegendentry{\tiny  $B>0 $}
			\addplot[dashed, domain = 0:2.2, cyan!70] (-1,{x});
			\addplot[dashed, domain = 0:2.2, cyan!70] (1,{x});
		\end{axis}
\end{tikzpicture}

Figure 2: Graph of $f(x)$ ($A<0$)
\end{center}

So the following conclusions is easy to be obtained. 
\begin{proposition}\label{pro:1} $V_4(\Phi_1,\Phi_2)\geq0$  if and only if
$$
 \begin{cases}C\geq0,\\
A-B+C\geq0, \\
A+B+C\geq0.
   \end{cases}$$
\end{proposition}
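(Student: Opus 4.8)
The plan is to regard $\phi_1>0$, $\phi_2>0$ and $\rho\in[0,1]$ as fixed, so that the quantities $A$, $B$, $C$ of \eqref{eq:1}--\eqref{eq:2} are fixed real numbers, and to read the assertion as: the quadratic $f(x)=Ax^2+Bx+C$ of \eqref{eq:3} satisfies $f(x)\geq0$ for every $x\in[-1,1]$ if and only if the displayed system holds. Since $\pm1\in[-1,1]$, evaluating $f$ at the two endpoints already shows that $f(-1)=A-B+C\geq0$ and $f(1)=A+B+C\geq0$ are necessary, which are the second and third lines of the system; the remaining content is to locate $\min_{x\in[-1,1]}f(x)$, and I would do this by the elementary case split on the sign of $A$ that is pictured in Figures 1 and 2.

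First I would take $A>0$, so that $f$ is a strictly convex parabola with vertex $x^{\ast}=-B/(2A)$ and minimum value $f(x^{\ast})=(4AC-B^2)/(4A)$. If $x^{\ast}\in[-1,1]$, equivalently $-2A\leq B\leq2A$, then $\min_{x\in[-1,1]}f=f(x^{\ast})$, which is nonnegative iff $4AC-B^2\geq0$ (using $4A>0$); together with its side condition this is exactly the first line, and in this regime $f(\pm1)\geq0$ follows automatically from convexity. If instead $x^{\ast}<-1$, i.e. $B>2A$, then $f$ increases on $[-1,1]$ and $\min_{x\in[-1,1]}f=f(-1)=A-B+C$; if $x^{\ast}>1$, i.e. $B<-2A$, then $f$ decreases on $[-1,1]$ and $\min_{x\in[-1,1]}f=f(1)=A+B+C$. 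Next, for $A=0$ the function $f(x)=Bx+C$ is affine with $\min_{x\in[-1,1]}f=C-|B|$, nonnegative iff $A-B+C\geq0$ and $A+B+C\geq0$. Finally, for $A<0$ the function $f$ is strictly concave, so its minimum over $[-1,1]$ is attained at an endpoint — at $x=-1$ when $B\geq0$ and at $x=1$ when $B<0$, as in Figure 2 — whence $\min_{x\in[-1,1]}f=\min\{f(-1),f(1)\}\geq0$ iff the second and third lines hold.

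Assembling the cases gives both implications: if the system holds then, in whichever case $A$ falls, one reads off $\min_{x\in[-1,1]}f\geq0$; conversely $f\geq0$ on $[-1,1]$ forces the two endpoint inequalities and, whenever $-2A\leq B\leq2A$, the vertex inequality $4AC-B^2\geq0$. I do not expect a deep mathematical obstacle — it is quadratic optimization on a compact interval — but the step I would be most careful about is the logical reading of the system: the first line is to be understood as the conditional "$4AC-B^2\geq0$ holds whenever $-2A\leq B\leq2A$" rather than as an unconditional demand $B\in[-2A,2A]$, since otherwise the "only if" direction fails (e.g. $f(x)=x^2+10x+100>0$ on $[-1,1]$ while $B=10>2=2A$). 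One should also check the junctions $B=\pm2A$, where the vertex sits at $x=\mp1$ and $4AC-B^2\geq0$ reduces to $f(\mp1)\geq0$, so the case boundaries are consistent; and note that when $A\leq0$ the interval $[-2A,2A]$ is empty (or the point $0$), so the first line is never active and does not over-constrain the concave or affine cases. The detailed verification is carried out in the Appendix.
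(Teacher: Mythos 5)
Your argument is correct and is essentially the paper's own: the paper likewise regards $f(x)=Ax^2+Bx+C$ as a quadratic on $[-1,1]$ and locates its minimum by the case split on the sign of $A$ illustrated in Figures 1 and 2 (vertex value $\frac{4AC-B^2}{4A}$ when $-\frac{B}{2A}\in[-1,1]$, endpoint values $A\pm B+C$ otherwise). Your caveat about reading the first line as the conditional ``$4AC-B^2\geq0$ whenever $-2A\leq B\leq 2A$'' is well taken---under an unconditional conjunction reading the ``only if'' direction would fail, exactly as your example $x^2+10x+100$ shows---and it is the reading consistent with the more explicit case-by-case version of the statement.
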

\begin{proof}
	``Necessity".   $V_4(\Phi_1,\Phi_2)=f(x)\geq0$ implies that $$f(0)=C\geq0, f(1)=A+B+C\geq0, f(-1)=A-B+C\geq0.$$
	``Sufficiency".  The quadratic function $f(x)$ is non-negative in the interval $[-1,1]$ if and only if its minimum is non-negative in the interval $[-1,1]$. 
	The minimum of $f(x)$ is the smallest of the three numbers, $$f(-\frac{B}{2A})=\dfrac{4AC-B^2}{4A}, f(1), f(-1),$$  where $ -\frac{B}{2A}$ is the unique extreme point if $-\frac{B}{2A}\in[-1,1]$. Since the point $ -\frac{B}{2A}$ is the minimum point when $A>0$,  then $-\frac{B}{2A}\in[-1,1]$ means $$-2A\leq B\leq 2A.$$ It follows from Proposition \ref{pro:3} (2) that $B=0.$ So,  $C\geq0$ implies $f(-\frac{B}{2A})=C\geq0,$ and hence, $f(x)\geq0$ if $C\geq0$, $f(-1)\geq 0$, $f(1)\geq 0$.
\end{proof}
The following conclusions is easy to be obtained. For detail proof, see Appendix.
\begin{proposition}\label{pro:3}  (1) $C\geq0$ for all $\phi_1\geq0,\ \phi_2\geq0$  if and only if  $$  \Lambda_3+\Lambda_4-\Lambda_5+2\sqrt{\Lambda_1\Lambda_2}\geq0, \Lambda_3+2\sqrt{\Lambda_1\Lambda_2}\geq0.$$
	(2) If $-2A\leq B\leq 2A$ for all $\phi_1\geq0,\ \phi_2\geq0$, then $$\Lambda_6=0,\  \Lambda_7=0, \  \Lambda_5\geq0,\  i.e., \ B=0, A\geq0.$$
\end{proposition}

It follows from the equations \eqref{eq:1} and \eqref{eq:2} that
$$\aligned
A\pm B+C=&2\Lambda_5\rho^2\phi_1^2\phi_2^2\pm 2\rho(\Lambda_6\phi_1^2 +\Lambda_7\phi_2^2)\phi_1\phi_2\\
&+\Lambda_1\phi_1^4+\Lambda_2\phi_2^4+[\Lambda_3+(\Lambda_4-\Lambda_5)\rho^2]\phi_1^2\phi_2^2\\
=& (\Lambda_4+\Lambda_5)\phi_1^2\phi_2^2\rho^2\pm 2(\Lambda_6\phi_1^2 +\Lambda_7\phi_2^2)\phi_1\phi_2 \rho\\
&+\Lambda_1\phi_1^4+\Lambda_2\phi_2^4+\Lambda_3\phi_1^2\phi_2^2\\
=&a\rho^2\pm b\rho+c,
\endaligned$$
where
\begin{equation}\label{eq:4}
	a=(\Lambda_4+\Lambda_5)\phi_1^2\phi_2^2,\ b=2(\Lambda_6\phi_1^2 +\Lambda_7\phi_2^2)\phi_1\phi_2,
\end{equation}
\begin{equation}\label{eq:5}
	c=\Lambda_1\phi_1^4+\Lambda_2\phi_2^4+\Lambda_3\phi_1^2\phi_2^2.
\end{equation}

 For $\rho\in[0,1],$ Let
\begin{equation}\label{eq:6}
	s(\rho)=a\rho^2+b\rho+c\mbox{ and }t(\rho)=a\rho^2-b\rho+c.
\end{equation}
So, if $a>0$, then the function $s(\rho)$  reaches its minimum $\frac{4ac-b^2}{4a}$ at a point $\rho=-\frac{b}{2a}\in [0,1]$, or  attains its minimum $c$ (or  $a+c+ b$) at a point $\rho= 0$ (or $1$) ($-\frac{b}{2a}\notin [0,1]$). At that moment, the graph-like  of quadratic function $s(\rho)$ is illustrated by the following image:
\begin{center}\begin{tikzpicture}
		\begin{axis}[
			xlabel={$x$}, ylabel={$y$},
			xmin=-3, xmax=4.3,
			ymin=-0.5, ymax=3.5,
			xtick={-2, -1,  0, 1, 2},
			xticklabels={$-2$, $-1$, $0$, $1$, $2$},
			line width=1pt,
			axis lines=center,
			]
			\addplot[smooth,domain=-2:3, red!70]{(x-0.5)^2+0.2};
			\addlegendentry{\tiny $0>b>-2a $}
			\addplot[smooth,domain=-4:2, blue!70]{(x+0.6)^2+0.2};
			\addlegendentry{\tiny $-\frac{b}{2a}<0 $}
			\addplot[smooth,domain=-2:4]{(x-1.6)^2-0.2};
			\addlegendentry{\tiny  $-\frac{b}{2a}>1 $}
			\addplot[dashed, domain = 0:4, cyan!70] (0,{x});
			\addplot[dashed, domain = 0:4, cyan!70] (1,{x});
		\end{axis}
\end{tikzpicture}

Figure 3: Graph of $s(\rho)$ ($a>0$)\end{center}
 If $a\leq0$, then  $s(\rho)$ attains its minimum $c$ or $a+b+c$ at a point $\rho=0$ ($-\frac{b}{2a}\geq\frac12$) or $\rho=1$  ($-\frac{b}{2a}\leq\frac12$); if $a=0$, then  $s(\rho)$ attains its minimum $c$ or $b+c$ at a point $\rho=0$ ($b\geq0$)  or $\rho=1$ ($b\leq0$). The graph-like of $s(\rho)$ is  as shown below:
 \begin{center}\begin{tikzpicture}
 		\begin{axis}[
 			xlabel={$x$}, ylabel={$y$},
 			xmin=-2.1, xmax=3,
 			ymin=-0.3, ymax=2.3,
 			xtick={-2, -1,  0, 0.5, 1, 2},
 			xticklabels={$-2$, $-1$, $0$, $\frac12$, $1$, $2$},
 			line width=1pt,
 			axis lines=center,
 			]
 			\addplot[smooth,domain=-4:3, red!70]{-(x-0.89)^2+1.9};
 			\addlegendentry{\tiny $-\frac{b}{2a}>\frac12 $}
 			 			\addplot[smooth,domain=-3:4]{-(x-0.2)^2+1.9};
 			\addlegendentry{\tiny  $-\frac{b}{2a}<\frac12 $}
 		\addplot[dashed, domain = 0:2.1, blue!70] (0.5,{x});
 		\addlegendentry{\tiny  $x=\frac12 $}
 		\addplot[dashed, domain = -0.05:2.1, cyan!70] (0,{x});
 		\addplot[dashed, domain = -0.01:2.1, cyan!70] (1,{x});
 		\end{axis}
 \end{tikzpicture}

Figure 4: Graph of $s(\rho)$ ($a<0$)\end{center}

Similarly, if $a>0$, then the function $t(\rho)$ reaches its minimum $\frac{4ac-b^2}{4a}$ at a point $\frac{b}{2a}\in [0,1]$, or  attains its minimum $c$ (or  $a+c+ b$) at a point $\rho= 0$ (or $1$) ($\frac{b}{2a}\notin [0,1]$). The graph-like of $t(\rho)$ is the image below:
\begin{center}\begin{tikzpicture}
		\begin{axis}[
			xlabel={$x$}, ylabel={$y$},
			xmin=-2.5, xmax=4.3,
			ymin=-0.5, ymax=3.5,
			xtick={-2, -1,  0, 1, 2},
			xticklabels={$-2$, $-1$, $0$, $1$, $2$},
			line width=1pt,
			axis lines=center,
			]
			\addplot[smooth,domain=-2:3, red!70]{(x-0.5)^2+0.2};
			\addlegendentry{\tiny $0<b<2a $}
			\addplot[smooth,domain=-4:2, blue!70]{(x+0.6)^2-0.2};
			\addlegendentry{\tiny $\frac{b}{2a}<0 $}
			\addplot[smooth,domain=-2:4]{(x-1.6)^2+0.2};
			\addlegendentry{\tiny  $\frac{b}{2a}>1$}
			\addplot[dashed, domain = 0:3.2, cyan!70] (0,{x});
			\addplot[dashed, domain = 0:3.3, cyan!70] (1,{x});
		\end{axis}
\end{tikzpicture}

Figure 5: Graph of $t(\rho)$ ($a>0$)\end{center}
 if $a<0$, then  $t(\rho)$ attains its minimum $c$ or $a-b+c$ at a point $\rho=0$ ($\frac{b}{2a}\geq\frac12$) or $\rho=1$  ($\frac{b}{2a}\leq\frac12$); if $a=0$, then  $t(\rho)$ attains its minimum $c$ or $-b+c$ at a point $\rho=0$ ($b\leq0$)  or $\rho=1$ ($b\geq0$). The graph-like of $t(\rho)$ is the following: 
 \begin{center}\begin{tikzpicture}
 		\begin{axis}[
 			xlabel={$x$}, ylabel={$y$},
 			xmin=-1.5, xmax=2.5,
 			ymin=-0.3, ymax=2.3,
 			xtick={ -1,  0, 0.5, 1, 2},
 			xticklabels={ $-1$, $0$, $\frac12$, $1$, $2$},
 			line width=1pt,
 			axis lines=center,
 			]
 			\addplot[smooth,domain=-4:3, red!70]{-(x-0.89)^2+1.9};
 			\addlegendentry{\tiny $\frac{b}{2a}>\frac12 $}
 			\addplot[smooth,domain=-3:4]{-(x-0.2)^2+1.9};
 			\addlegendentry{\tiny  $\frac{b}{2a}<\frac12 $}
 			\addplot[dashed, domain = 0:2, blue!70] (0.5,{x});
 			\addlegendentry{\tiny  $x=\frac12 $}
 			\addplot[dashed, domain = -0.1:2, cyan!70] (0,{x});
 			\addplot[dashed, domain = 0:2, cyan!70] (1,{x});
 		\end{axis}
 \end{tikzpicture}

Figure 6: Graph of $t(\rho)$ ($a<0$)\end{center}

\begin{proposition}\label{pro:4}
(1)	$A+B+C\geq0$  if and only if
	$$
				c\geq0\mbox{ and }
			a+b+c\geq0.$$
(2) $A-B+C\geq0$   if and only if
$$
		c\geq0\mbox{ and }
		a-b+c\geq0.	
$$
\end{proposition}
\begin{proof}
(1)	``Necessity".   $A+B+C=s(\rho)\geq0$ ($\rho\in[0,1]$) implies that $$s(0)=c\geq0, s(1)=a+b+c\geq0.$$
	``Sufficiency".  The quadratic function $s(\rho)$ is non-negative in the interval $[0,1]$ if and only if its minimum is non-negative in the interval $[0,1]$. 
	The minimum of $s(\rho)$ is the smallest of the three numbers, $$s(-\frac{b}{2a})=\dfrac{4ac-b^2}{4a}, s(0), s(1),$$  where $ -\frac{b}{2a}$ is the unique extreme point if $-\frac{b}{2a}\in[0,1]$. Since the point $ -\frac{b}{2a}$ is the minimum point when $a>0$,  then $-\frac{b}{2a}\in[0,1]$ means $$-2a\leq b\leq 0.$$ It follows from Proposition \ref{pro:5} (2) that $b=0.$ So,  $c\geq0$ implies $s(-\frac{b}{2a})=c\geq0,$ and hence, $s(\rho)\geq0$ if $c\geq0$,  $s(1)\geq 0$. 
	
Using the same proof, it is easy to show (2).
\end{proof}
So the following conclusions is easy to be obtained. For detail proof, see Appendix.
\begin{proposition}\label{pro:5} (1) $c\geq0$  if and only if $$\Lambda_1\geq0, \Lambda_2\geq0,\Lambda_3+2\sqrt{\Lambda_1\Lambda_2}\geq0.$$
(2)	If  $-2a\leq b\leq 0$, then $\Lambda_6=0, \Lambda_7=0, \Lambda_4+\Lambda_5\geq0$, i.e., $$b=0,a\geq0.$$
(3)	If $2a\geq b\geq 0$, then $\Lambda_6=0, \Lambda_7=0, \Lambda_4+\Lambda_5\geq0$, i.e., $$b=0,a\geq0.$$
\end{proposition}

Next we show  $a\pm b+c\geq0$ for all $\phi_1\geq0,\ \phi_2\geq0$.
\begin{proposition}\label{pro:7}
$a-b+c\geq0$ for all $\phi_1\geq0,\ \phi_2\geq0$ if and only if \begin{flushleft} (1) $\Delta\leq0$, $\Lambda_6\sqrt{\Lambda_2}+\Lambda_7\sqrt{\Lambda_1}<0$;  \end{flushleft} \begin{flushleft} (2) $\Lambda_6\leq0$, $\Lambda_7\leq0$, $\Lambda_3+\Lambda_4+\Lambda_5+2\sqrt{\Lambda_1\Lambda_2}\geq0$; \end{flushleft}
\begin{flushleft}  (3) $\Delta\geq0$,
$|\Lambda_6\sqrt{\Lambda_2}-\Lambda_7\sqrt{\Lambda_1}|\leq2\sqrt{\Lambda_1\Lambda_2(\Lambda_3+\Lambda_4+\Lambda_5)+2\Lambda_1\Lambda_2\sqrt{\Lambda_1\Lambda_2}}$ and
\end{flushleft}
\begin{flushleft}
\qquad(i) $-2\sqrt{\Lambda_1\Lambda_2}\leq \Lambda_3+\Lambda_4+\Lambda_5 \leq6\sqrt{\Lambda_1\Lambda_2}$,
\end{flushleft}
\begin{flushleft}
\qquad(ii) $\Lambda_3+\Lambda_4+\Lambda_5>6\sqrt{\Lambda_1\Lambda_2}$ and\end{flushleft}
\begin{flushleft}
\qquad$\Lambda_6\sqrt{\Lambda_2}+\Lambda_7\sqrt{\Lambda_1} \leq 2\sqrt{\Lambda_1\Lambda_2(\Lambda_3+\Lambda_4+\Lambda_5)-2\Lambda_1\Lambda_2\sqrt{\Lambda_1\Lambda_2}}$,
\end{flushleft}
where $\Delta=4(12\Lambda_1\Lambda_2-12\Lambda_6\Lambda_7+(\Lambda_3+\Lambda_4+\Lambda_5)^{2})^{3}-(72\Lambda_1\Lambda_2(\Lambda_3+\Lambda_4+\Lambda_5)+36\Lambda_6\Lambda_7(\Lambda_3+\Lambda_4+\Lambda_5)-2(\Lambda_3+\Lambda_4+\Lambda_5)^{3}-108\Lambda_1\Lambda_7^{2}-108\Lambda_6^{2}\Lambda_2)^{2}$.
\end{proposition}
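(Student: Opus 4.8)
The plan is to recognize $a-b+c$ as a quartic binary form in the nonnegative variables $\phi_1,\phi_2$ and then apply Lemma \ref{lem:1} directly. Substituting \eqref{eq:4} and \eqref{eq:5} and collecting powers gives
$$a-b+c=\Lambda_1\phi_1^4-2\Lambda_6\phi_1^3\phi_2+(\Lambda_3+\Lambda_4+\Lambda_5)\phi_1^2\phi_2^2-2\Lambda_7\phi_1\phi_2^3+\Lambda_2\phi_2^4,$$
which is exactly the form \eqref{eq:f} with $x=\phi_1$, $y=\phi_2$ and
$$a_0=\Lambda_1,\qquad a_1=-2\Lambda_6,\qquad a_2=\Lambda_3+\Lambda_4+\Lambda_5,\qquad a_3=-2\Lambda_7,\qquad a_4=\Lambda_2.$$
Since the standing assumptions of this section give $a_0=\Lambda_1>0$ and $a_4=\Lambda_2>0$, Lemma \ref{lem:1} applies verbatim, so $a-b+c\geq0$ for all $\phi_1,\phi_2\geq0$ if and only if one of the three alternatives of that lemma holds for these values of $a_0,\dots,a_4$.

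It then remains to rewrite each of those three alternatives in terms of $\Lambda_1,\dots,\Lambda_7$. First, $a_1\sqrt{a_4}+a_3\sqrt{a_0}=-2(\Lambda_6\sqrt{\Lambda_2}+\Lambda_7\sqrt{\Lambda_1})$, so the sign condition in alternative (1) becomes $\Lambda_6\sqrt{\Lambda_2}+\Lambda_7\sqrt{\Lambda_1}<0$, and the one-sided bound of alternative (3)(ii) becomes $\Lambda_6\sqrt{\Lambda_2}+\Lambda_7\sqrt{\Lambda_1}\leq 2\sqrt{\Lambda_1\Lambda_2(\Lambda_3+\Lambda_4+\Lambda_5)-2\Lambda_1\Lambda_2\sqrt{\Lambda_1\Lambda_2}}$ after dividing by $-2$, using $a_0a_2a_4-2a_0a_4\sqrt{a_0a_4}=\Lambda_1\Lambda_2(\Lambda_3+\Lambda_4+\Lambda_5)-2\Lambda_1\Lambda_2\sqrt{\Lambda_1\Lambda_2}$, and absorbing the factor $4$ into the radical. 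Likewise $|a_1\sqrt{a_4}-a_3\sqrt{a_0}|=2|\Lambda_6\sqrt{\Lambda_2}-\Lambda_7\sqrt{\Lambda_1}|$, which yields the displayed bound in (3); the conditions $a_1\geq0,\ a_3\geq0$ become $\Lambda_6\leq0,\ \Lambda_7\leq0$; and $2\sqrt{a_0a_4}+a_2\geq0$, $-2\sqrt{a_0a_4}\leq a_2\leq6\sqrt{a_0a_4}$, $a_2>6\sqrt{a_0a_4}$ become the stated inequalities on $\Lambda_3+\Lambda_4+\Lambda_5$ against $\sqrt{\Lambda_1\Lambda_2}$.

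Finally I would verify that the discriminant $\Delta$ of Lemma \ref{lem:1} specializes to the $\Delta$ of the statement: with the above substitutions $12a_0a_4-3a_1a_3+a_2^2=12\Lambda_1\Lambda_2-12\Lambda_6\Lambda_7+(\Lambda_3+\Lambda_4+\Lambda_5)^2$, while $72a_0a_2a_4+9a_1a_2a_3-2a_2^3-27a_0a_3^2-27a_1^2a_4$ becomes $72\Lambda_1\Lambda_2(\Lambda_3+\Lambda_4+\Lambda_5)+36\Lambda_6\Lambda_7(\Lambda_3+\Lambda_4+\Lambda_5)-2(\Lambda_3+\Lambda_4+\Lambda_5)^3-108\Lambda_1\Lambda_7^2-108\Lambda_6^2\Lambda_2$, matching the formula in the proposition. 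There is no real mathematical obstacle here — the statement is essentially a specialization of Lemma \ref{lem:1}; the only thing demanding care is the consistent handling of the sign reversal caused by $a_1=-2\Lambda_6$, $a_3=-2\Lambda_7$ (which turns ``$\geq$'' into ``$\leq$'' and flips the direction of the one-sided condition in (3)(ii)) together with the bookkeeping of the constants $2$ and $4$ when moving them in and out of the square roots.
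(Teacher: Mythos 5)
Your proposal is correct and follows essentially the same route as the paper: the paper's own proof also rewrites $a-b+c$ as the quartic form \eqref{eq:f} with $a_0=\Lambda_1$, $a_1=-2\Lambda_6$, $a_2=\Lambda_3+\Lambda_4+\Lambda_5$, $a_3=-2\Lambda_7$, $a_4=\Lambda_2$ and then specializes Lemma \ref{lem:1}. Your sign and constant bookkeeping (including the verification that $\Delta$ specializes correctly) is accurate and in fact slightly more careful than the paper's, which contains a harmless typo ($\sqrt{\Lambda_2}$ for $\sqrt{\Lambda_1}$) in its intermediate display.
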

\begin{proof}
From the equations \eqref{eq:4} and \eqref{eq:5}, it follows that
$$\aligned
a-b+c=&\Lambda_1\phi_1^4+\Lambda_2\phi_2^4+(\Lambda_3+\Lambda_4+\Lambda_5)\phi_1^2\phi_2^2\\
      &-2\Lambda_6\phi_1^3\phi_2 -2\Lambda_7\phi_2^3\phi_1\\
      =& f(\phi_1,\phi_2)=a_0 \phi_1^4+a_1\phi_1^3\phi_2+a_2\phi_1^2\phi_2^2+a_3\phi_1\phi_2^3+a_4\phi_2^4,
\endaligned$$
where $a_0=\Lambda_1$, $a_1=-2\Lambda_6$, $a_2=\Lambda_3+\Lambda_4+\Lambda_5$, $a_3=-2\Lambda_7$, $a_4=\Lambda_2$. Then  an application of the co-positivity of quartic form \eqref{eq:f} (Lemma \ref{lem:1}) yields that the sufficient and necessary conditions of the inequality $a-b+c\geq0$ is \begin{flushleft} (1) $\Delta\leq0$, $-2\Lambda_6\sqrt{\Lambda_2}-2\Lambda_7\sqrt{\Lambda_2}>0$;  \end{flushleft} \begin{flushleft} (2) $-2\Lambda_6\geq0$, $-2\Lambda_7\geq0$, $\Lambda_3+\Lambda_4+\Lambda_5+2\sqrt{\Lambda_1\Lambda_2}\geq0$; \end{flushleft}
\begin{flushleft}  (3) $\Delta\geq0$,
$|2\Lambda_6\sqrt{\Lambda_2}-2\Lambda_7\sqrt{\Lambda_1}|\leq4\sqrt{\Lambda_1\Lambda_2(\Lambda_3+\Lambda_4+\Lambda_5)+2\Lambda_1\Lambda_2\sqrt{\Lambda_1\Lambda_2}}$ and
\end{flushleft}
\begin{flushleft}
\qquad(i) $-2\sqrt{\Lambda_1\Lambda_2}\leq \Lambda_3+\Lambda_4+\Lambda_5 \leq6\sqrt{\Lambda_1\Lambda_2}$,
\end{flushleft}
\begin{flushleft}
\qquad(ii) $\Lambda_3+\Lambda_4+\Lambda_5>6\sqrt{\Lambda_1\Lambda_2}$ and $-2\Lambda_6\sqrt{\Lambda_2}-2\Lambda_7\sqrt{\Lambda_1} \geq -4\sqrt{\Lambda_1\Lambda_2(\Lambda_3+\Lambda_4+\Lambda_5)-2\Lambda_1\Lambda_2\sqrt{\Lambda_1\Lambda_2}}$.
\end{flushleft}
The required conclusions follow.
\end{proof}

Similarly, the following conclusion is easy to be showed.

\begin{proposition}\label{pro:8}
$a+b+c\geq0$ for all $\phi_1\geq0,\ \phi_2\geq0$ if and only if \begin{flushleft} (1) $\Delta\leq0$, $\Lambda_6\sqrt{\Lambda_2}+\Lambda_7\sqrt{\Lambda_1}>0$;  \end{flushleft} \begin{flushleft} (2) $\Lambda_6\geq0$, $\Lambda_7\geq0$, $\Lambda_3+\Lambda_4+\Lambda_5+2\sqrt{\Lambda_1\Lambda_2}\geq0$; \end{flushleft}
\begin{flushleft}  (3) $\Delta\geq0$,
$|\Lambda_6\sqrt{\Lambda_2}-\Lambda_7\sqrt{\Lambda_1}|\leq2\sqrt{\Lambda_1\Lambda_2(\Lambda_3+\Lambda_4+\Lambda_5)+2\Lambda_1\Lambda_2\sqrt{\Lambda_1\Lambda_2}}$ and
\end{flushleft}
\begin{flushleft}
\qquad(i) $-2\sqrt{\Lambda_1\Lambda_2}\leq \Lambda_3+\Lambda_4+\Lambda_5 \leq6\sqrt{\Lambda_1\Lambda_2}$,
\end{flushleft}
\begin{flushleft}
\qquad(ii) $\Lambda_3+\Lambda_4+\Lambda_5>6\sqrt{\Lambda_1\Lambda_2}$ and \end{flushleft}
\begin{flushleft}
\qquad$\Lambda_6\sqrt{\Lambda_2}+\Lambda_7\sqrt{\Lambda_1} \geq -2\sqrt{\Lambda_1\Lambda_2(\Lambda_3+\Lambda_4+\Lambda_5)-2\Lambda_1\Lambda_2\sqrt{\Lambda_1\Lambda_2}}$.
\end{flushleft}
\end{proposition}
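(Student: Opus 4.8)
The plan is to reduce the statement to Lemma \ref{lem:1} exactly as in the proof of Proposition \ref{pro:7}, the only change being a sign flip in the odd-degree coefficients. First I would substitute the definitions \eqref{eq:4} and \eqref{eq:5} into $a+b+c$ and collect terms, obtaining
$$a+b+c=\Lambda_1\phi_1^4+2\Lambda_6\phi_1^3\phi_2+(\Lambda_3+\Lambda_4+\Lambda_5)\phi_1^2\phi_2^2+2\Lambda_7\phi_1\phi_2^3+\Lambda_2\phi_2^4,$$
which is a quartic homogeneous polynomial of the form \eqref{eq:f} with $a_0=\Lambda_1$, $a_1=2\Lambda_6$, $a_2=\Lambda_3+\Lambda_4+\Lambda_5$, $a_3=2\Lambda_7$, $a_4=\Lambda_2$. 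Note this differs from the expansion of $a-b+c$ in Proposition \ref{pro:7} only by replacing $\Lambda_6,\Lambda_7$ with $-\Lambda_6,-\Lambda_7$.

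Since we work under the standing assumption $\Lambda_1>0$, $\Lambda_2>0$ (hence $a_0>0$, $a_4>0$), I would then invoke Lemma \ref{lem:1} directly and translate its three alternatives. In case (1), $a_1\sqrt{a_4}+a_3\sqrt{a_0}>0$ becomes $2\Lambda_6\sqrt{\Lambda_2}+2\Lambda_7\sqrt{\Lambda_1}>0$, i.e.\ $\Lambda_6\sqrt{\Lambda_2}+\Lambda_7\sqrt{\Lambda_1}>0$. In case (2), $a_1\geq0$, $a_3\geq0$, $2\sqrt{a_0a_4}+a_2\geq0$ become $\Lambda_6\geq0$, $\Lambda_7\geq0$, $\Lambda_3+\Lambda_4+\Lambda_5+2\sqrt{\Lambda_1\Lambda_2}\geq0$. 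In case (3), using $\sqrt{4X}=2\sqrt{X}$ and dividing the absolute-value inequality by $2$ gives $|\Lambda_6\sqrt{\Lambda_2}-\Lambda_7\sqrt{\Lambda_1}|\leq2\sqrt{\Lambda_1\Lambda_2(\Lambda_3+\Lambda_4+\Lambda_5)+2\Lambda_1\Lambda_2\sqrt{\Lambda_1\Lambda_2}}$, subcase (i) is immediate, and subcase (ii) becomes $\Lambda_3+\Lambda_4+\Lambda_5>6\sqrt{\Lambda_1\Lambda_2}$ together with $\Lambda_6\sqrt{\Lambda_2}+\Lambda_7\sqrt{\Lambda_1}\geq-2\sqrt{\Lambda_1\Lambda_2(\Lambda_3+\Lambda_4+\Lambda_5)-2\Lambda_1\Lambda_2\sqrt{\Lambda_1\Lambda_2}}$.

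A short bookkeeping check is still required: I would verify that the discriminant $\Delta$ from Lemma \ref{lem:1}, namely $4(12a_0a_4-3a_1a_3+a_2^2)^3-(72a_0a_2a_4+9a_1a_2a_3-2a_2^3-27a_0a_3^2-27a_1^2a_4)^2$, specializes to the $\Delta$ in the statement; since $a_1a_3=4\Lambda_6\Lambda_7$, $a_1^2a_4=4\Lambda_6^2\Lambda_2$, and $a_0a_3^2=4\Lambda_1\Lambda_7^2$, this is the same expression that appears in Proposition \ref{pro:7} (indeed $\Delta$ is invariant under $\Lambda_6,\Lambda_7\mapsto-\Lambda_6,-\Lambda_7$). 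There is essentially no genuine obstacle here: the whole argument is a transcription of Lemma \ref{lem:1} under the substitution above, and the only place to be careful is keeping signs straight in conditions (1) and (2), where the reflection $\Lambda_6,\Lambda_7\mapsto-\Lambda_6,-\Lambda_7$ turns the ``$<0$'' and ``$\leq0$'' of Proposition \ref{pro:7} into the ``$>0$'' and ``$\geq0$'' stated here.
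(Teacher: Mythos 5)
Your proposal is correct and is exactly the argument the paper intends: the paper proves Proposition \ref{pro:7} by expanding $a-b+c$ as a quartic form and applying Lemma \ref{lem:1}, and then states Proposition \ref{pro:8} with only the remark ``Similarly, the following conclusion is easy to be showed,'' which is precisely your substitution $a_1=2\Lambda_6$, $a_3=2\Lambda_7$ (the sign flip $\Lambda_6,\Lambda_7\mapsto-\Lambda_6,-\Lambda_7$ relative to Proposition \ref{pro:7}). Your bookkeeping on the factor of $2$ in the absolute-value inequality and on the invariance of $\Delta$ under that sign flip is also accurate.
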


Combining Propositions \ref{pro:4} (1)  with Propositions \ref{pro:5} (1) and \ref{pro:8}, the following results are easy to obtain.

\begin{proposition}\label{pro:9}
	 $A+B+C\geq0$ for all $\phi_1\geq0,\ \phi_2\geq0$ if and only if  $\Lambda_3+2\sqrt{\Lambda_1\Lambda_2}\geq0$ and
	 $$\aligned (1)\  &\Delta\leq0, \Lambda_6\sqrt{\Lambda_2}+\Lambda_7\sqrt{\Lambda_1}>0; \\
	(2)\ &\Lambda_6\geq0, \Lambda_7\geq0, \Lambda_3+ \Lambda_4+\Lambda_5+2\sqrt{\Lambda_1\Lambda_2}\geq0;\\
(3)\	&  \Delta\geq0,\\ &|\Lambda_6\sqrt{\Lambda_2}-\Lambda_7\sqrt{\Lambda_1}|\leq2\sqrt{\Lambda_1\Lambda_2(\Lambda_3+\Lambda_4+\Lambda_5)+2\Lambda_1\Lambda_2\sqrt{\Lambda_1\Lambda_2}},\\
	 &\mbox{(i) }-2\sqrt{\Lambda_1\Lambda_2}\leq \Lambda_3+\Lambda_4+\Lambda_5 \leq6\sqrt{\Lambda_1\Lambda_2}, \\
	 &\mbox{(ii) }\Lambda_3+\Lambda_4+\Lambda_5>6\sqrt{\Lambda_1\Lambda_2}\mbox{ and }\\
	 &\Lambda_6\sqrt{\Lambda_2}+\Lambda_7\sqrt{\Lambda_1} \geq- 2\sqrt{\Lambda_1\Lambda_2(\Lambda_3+\Lambda_4+\Lambda_5)-2\Lambda_1\Lambda_2\sqrt{\Lambda_1\Lambda_2}}.
	 \endaligned$$
\end{proposition}

Simultaneously applying Proposition \ref{pro:5} (1) and Proposition \ref{pro:7} to Propositions \ref{pro:4} (2), the following results are easy to obtain. 
\begin{proposition}\label{pro:10} 	$A-B+C\geq0$ for all $\phi_1\geq0,\ \phi_2\geq0$ if and only if $\Lambda_3+2\sqrt{\Lambda_1\Lambda_2}\geq0$ and
	$$\aligned (1)\ & \Delta\leq0, \Lambda_6\sqrt{\Lambda_2}+\Lambda_7\sqrt{\Lambda_1}<0;\\
	(2)\ &\Lambda_6\leq0, \Lambda_7\leq0, \Lambda_3+\Lambda_4+\Lambda_5+2\sqrt{\Lambda_1\Lambda_2}\geq0;\\
	(3)\	& \Delta\geq0,\\ &|\Lambda_6\sqrt{\Lambda_2}-\Lambda_7\sqrt{\Lambda_1}|\leq2\sqrt{\Lambda_1\Lambda_2(\Lambda_3+\Lambda_4+\Lambda_5)+2\Lambda_1\Lambda_2\sqrt{\Lambda_1\Lambda_2}},\\
	&\mbox{(i) }-2\sqrt{\Lambda_1\Lambda_2}\leq \Lambda_3+\Lambda_4+\Lambda_5 \leq6\sqrt{\Lambda_1\Lambda_2}, \\
	&\mbox{(ii) }\Lambda_3+\Lambda_4+\Lambda_5>6\sqrt{\Lambda_1\Lambda_2}\mbox{ and }\\
	&\Lambda_6\sqrt{\Lambda_2}+\Lambda_7\sqrt{\Lambda_1} \leq 2\sqrt{\Lambda_1\Lambda_2(\Lambda_3+\Lambda_4+\Lambda_5)-2\Lambda_1\Lambda_2\sqrt{\Lambda_1\Lambda_2}}.
	\endaligned$$
\end{proposition}

By combining Propositions \ref{pro:1}  with Propositions \ref{pro:3}, \ref{pro:9} and \ref{pro:10}, we easily showed our main result, Theorem \ref{thm:1}.

\begin{proof}[\bf The proof of Theorem \ref{thm:1}]It follows from Propositions \ref{pro:1} and \ref{pro:3} that $V_4(\Phi_1,\Phi_2)\geq0$  if and only if
\begin{equation}\begin{aligned}  &\Lambda_3+2\sqrt{\Lambda_1\Lambda_2}\geq0,\  \Lambda_3+\Lambda_4-\Lambda_5+2\sqrt{\Lambda_1\Lambda_2}\geq0, \\ 
&A-B+C\geq0,  \\
&A+B+C\geq0.  
\end{aligned}\end{equation}
Propositions \ref{pro:9} (2) and  \ref{pro:10} (2) mean that $$\Lambda_6=\Lambda_7=0,  \Lambda_3+2\sqrt{\Lambda_1\Lambda_2}\geq0, \Lambda_3+\Lambda_4+\Lambda_5+2\sqrt{\Lambda_1\Lambda_2}\geq0,$$
The above several inequalities together  is equivalent to 
$$\Lambda_6=\Lambda_7=0,  \Lambda_3+2\sqrt{\Lambda_1\Lambda_2}\geq0, \Lambda_3+\Lambda_4-|\Lambda_5|+2\sqrt{\Lambda_1\Lambda_2}\geq0.$$
This obtain (1) of Theorem \ref{thm:1}.

The two inqualities $A-B+C\geq0$ and $A+B+C\geq0$ imply that there is contradiction between Proposition \ref{pro:9} (1) and Proposition \ref{pro:10} (1), and so, it can not hold. 

 By Propositions \ref{pro:9} (3) and \ref{pro:10} (3),  (2) of  Theorem \ref{thm:1} follows easily. 
\end{proof}

\section{Conclusions}
In summary, we prove the analytical sufficient and necessary conditions of the BFB  conditions of 2HDM potential with explicit CP conservation. At the same time, for a 4th-order 2-dimensional symmetric tensor $\mathcal{A}(\rho,x)=(a_{ijkl})$ \eqref{eq:A} with two parameters $\rho\in[0,1]$ and $x\in[-1,1]$, the co-positivity is proved. 
\section{Some remarks}

     \ \ \ \ \ \ (1) When $\Lambda_6 = \Lambda_7 = 0$,  Theorem \ref{thm:1} coincides with the corresponding conclusion of Refs. \cite{DM,GK2005,K1985,N2020,NS,KKO,ERS,BFLRS}. That is,  $V_4^{\mathbb{Z}_2}(\Phi_1,\Phi_2)\geq0$  if and only if  $$\Lambda_1>0, \Lambda_2>0, \Lambda_3+2\sqrt{\Lambda_1\Lambda_2}\geq0, \Lambda_3+\Lambda_4-|\Lambda_5|+2\sqrt{\Lambda_1\Lambda_2}\geq0.$$ 

(2) In terms of the eigenvalues of a $4\times4$ matrix,  Ivanov \cite{I2007} gave the sufficient and necessary conditions of  the BFB of CP conserving 2HDM potential,  but not analytic.   It is not clear how to do this analytically, and this may be slow to implement numerically. Our conclusions is analytic for the CP conserving potential, which  are easy to code up, and presumably faster to run, for the large scans of the 2HDM parameter space which often take place in the literature.  \\

(3) For the BFB of CP conserving 2HDM potential, Kannike \cite{K2016}  presented a sufficient condition (Eq. (88)) by means of Lagrange multiplier, which is not completely analytic.  Recently, Bahl et al. \cite{BCCIW} gave a stronger sufficient conditions (Eqs. (5.20) and (5.21)) for the BFB of CP conserving 2HDM potential using Ulrich and Watson’s result (Eq.(30))\cite{UW}.  In this paper, we use the optimizing version (Lemma 1) of Ulrich and Watson’s result to yield the analytic expression of the BFB (For the more details  about Lemma 1 see Refs. \cite{SL2021,QYZ}).  For example,  $$\aligned	& \Lambda_1=\Lambda_2=1, \Lambda_3=-1, \Lambda_4=2,  \Lambda_5=-1, \Lambda_6=1, \Lambda_7=-1.\mbox{ Then, }\\
&\Lambda_3+2\sqrt{\Lambda_1\Lambda_2}=-1+2>0,\\& \Lambda_3+\Lambda_4-\Lambda_5+2\sqrt{\Lambda_1\Lambda_2}=-1+2-(-1)+2>0,  \\
& \Lambda_3+\Lambda_4+\Lambda_5+2\sqrt{\Lambda_1\Lambda_2}=-1+2-1+2>0,\\
&|\Lambda_6\sqrt{\Lambda_2}-\Lambda_7\sqrt{\Lambda_1}|=2<2\sqrt{\Lambda_1\Lambda_2(\Lambda_3+\Lambda_4+\Lambda_5)+2\Lambda_1\Lambda_2\sqrt{\Lambda_1\Lambda_2}}=2\sqrt2,\\
&-2\sqrt{\Lambda_1\Lambda_2}=-2< \Lambda_3+\Lambda_4+\Lambda_5=0 <6\sqrt{\Lambda_1\Lambda_2}=6, \\ &\Delta=4(12\Lambda_1\Lambda_2-12\Lambda_6\Lambda_7+(\Lambda_3+\Lambda_4+\Lambda_5)^{2})^{3}-(72\Lambda_1\Lambda_2(\Lambda_3+\Lambda_4+\Lambda_5)\\
&+36\Lambda_6\Lambda_7(\Lambda_3+\Lambda_4+\Lambda_5)-2(\Lambda_3+\Lambda_4+\Lambda_5)^{3}-108\Lambda_1\Lambda_7^{2}-108\Lambda_6^{2}\Lambda_2)^{2}\\&=4(12+12+0)^3-(0-108-108)^2>0.
\endaligned $$
That is, these parameters meet Theorem \ref{thm:1} (2), which means $V_4(\Phi_1,\Phi_2)\geq0$.  However, they can't satisfy the condition Eq. (5.20) of Bahl et.al.\cite{BCCIW}, i.e.
$$\aligned	& 3\sqrt{\Lambda_1\Lambda_2}-(\Lambda_3+|\Lambda_4|+|\Lambda_5|)=3-2>0,\\ &\sqrt{\Lambda_1\Lambda_2}+\Lambda_3-(|\Lambda_4|+|\Lambda_5|+4\left|\Lambda_6\sqrt[4]{\dfrac{\Lambda_2}{\Lambda_1}}\right|)=1-1-(2+1+4)<0,\\
&\sqrt{\Lambda_1\Lambda_2}+\Lambda_3-(|\Lambda_4|+|\Lambda_5|+4\left|\Lambda_7\sqrt[4]{\dfrac{\Lambda_1}{\Lambda_2}}\right|)=1-1-(2+1+4)<0.
\endaligned $$

(4)  The quartic part of such a 2HDM potential is the following \begin{equation}\label{eq:9}\aligned V_4(\Phi_1,\Phi_2
	=&\Lambda_1(\Phi_1^*\Phi_1)^2+\Lambda_2(\Phi_2^*\Phi_2)^2\\&+\Lambda_3(\Phi_1^*\Phi_1)(\Phi_2^*\Phi_2)+\Lambda_4(\Phi_1^*\Phi_2)(\Phi_2^*\Phi_1)\\
	&+\frac{\Lambda_5}2(\Phi_1^*\Phi_2)^2+\frac{\Lambda_5^*}2(\Phi_2^*\Phi_1)^2\\&+(\Phi_1^*\Phi_1)(\Lambda_6\Phi_1^*\Phi_2+\Lambda_6^*\Phi_2^*\Phi_1)\\
	&+(\Phi_2^*\Phi_2)(\Lambda_7\Phi_1^*\Phi_2+\Lambda_7^*\Phi_2^*\Phi_1)\\
	=& \Lambda_1\phi_1^4+\Lambda_2\phi_2^4\\&+(\Lambda_3+\Lambda_4\rho^2+\rho^2(\textbf{Re}\Lambda_5\cos2\theta+\textbf{Im}\Lambda_5\sin2\theta))\phi_1^2\phi_2^2\\
		&+2\rho(\textbf{Re}\Lambda_6\cos\theta+\textbf{Im}\Lambda_6\sin\theta)\phi_1^3\phi_2 \\&+2\rho(\textbf{Re}\Lambda_7\cos\theta+\textbf{Im}\Lambda_7\sin\theta)\phi_2^3\phi_1.
		\endaligned\end{equation}
Let $x=\cos\theta$	and $y=\sin\theta$.  Then the general  2HDM potential have three parameters about `angles', $(\rho, x, y)$ with $ x^2+y^2=1$.  So  the argument approach of Theorem \ref{thm:1} may not be applied directly to the general  2HDM potential. How to revised this approach, which deserves further study and perfectness.

%
\section*{Data Availablity Statement}
This manuscript has no associated data or the data will not be deposited. [Authors' comment: This is a theoretical study and there are no external data associated with the manuscript.]
\section*{Acknowledgements} The authors would like to express their sincere thanks to the editors and anonymous referees for his/her constructive comments and valuable suggestions, and to Professors Igor P. Ivanov, K. G. Klimenko, Garv Chauhan for useful discussions and for reading  the manuscript. This work was supported by the National Natural Science Foundation of P.R. China (Grant No. 12171064), by The team project of innovation leading talent in Chongqing (No.CQYC20210309536) and by the Foundation of Chongqing Normal University (20XLB009). 

\section*{Appendix}
 
\begin{proof}[\bf The proof of Propositions \ref{pro:3}] (1) $C\geq0,$ i.e. $$C=\Lambda_1\phi_1^4+\Lambda_2\phi_2^4+[\Lambda_3+(\Lambda_4-\Lambda_5)\rho^2]\phi_1^2\phi_2^2\geq0.$$
	which is equivalent to the co-positivity of $2\times 2$ matrix $M=(m_{ij})$ with its entries,
	$$m_{11}=\Lambda_1,\ m_{22}=\Lambda_2,\ m_{12}=m_{21}=\frac12[\Lambda_3+(\Lambda_4-\Lambda_5)\rho^2].$$
	That is, $\Lambda_3+(\Lambda_4-\Lambda_5)\rho^2+2\sqrt{\Lambda_1\Lambda_2}\geq0.$ Therefore, this is equivalent to
	$$\aligned \Lambda_3+\Lambda_4-\Lambda_5+2\sqrt{\Lambda_1\Lambda_2}\geq0,\\
	\Lambda_3+2\sqrt{\Lambda_1\Lambda_2}\geq0.\endaligned$$ (2) From the equation \eqref{eq:1}, it follows that
	$$\aligned
	B-2A=2(\Lambda_6\phi_1^2 +\Lambda_7\phi_2^2-2\Lambda_5\rho\phi_1\phi_2)\rho\phi_1\phi_2,\\
	B+2A=2(\Lambda_6\phi_1^2 +\Lambda_7\phi_2^2+2\Lambda_5\rho\phi_1\phi_2)\rho\phi_1\phi_2,
	\endaligned$$
	and hence,
	$$\aligned B-2A\leq0\Leftrightarrow -\Lambda_6\phi_1^2 -\Lambda_7\phi_2^2+2\Lambda_5\rho\phi_1\phi_2\geq0,\\
	B+2A\geq0\Leftrightarrow\ \ \Lambda_6\phi_1^2 +\Lambda_7\phi_2^2+2\Lambda_5\rho\phi_1\phi_2\geq0.\endaligned$$
	Thus,  $B-2A\leq0$  is equivalent to the co-positivity of a matrix $M=(m_{ij})$ with its entries $m_{11}=-\Lambda_6, m_{22}=-\Lambda_7,  m_{12}=m_{21}=\Lambda_5\rho.$ From the co-positivity  of $2\times 2$ matrix, it follows that $$ B-2A\leq0 \Leftrightarrow \Lambda_6\leq0,\Lambda_7\leq0 \mbox{ and } \Lambda_5\rho+\sqrt{\Lambda_6\Lambda_7}\geq0.$$
	
	Similarly, we also have $$B+2A\geq0 \Leftrightarrow \Lambda_6\geq0,\Lambda_7\geq0 \mbox{ and } \Lambda_5\rho+\sqrt{\Lambda_6\Lambda_7}\geq0.$$ Thus, the inequalities $-2A\leq B\leq 2A$ imply that  $ \Lambda_6=0,\Lambda_7=0, \Lambda_5\geq0$, i.e., $B=0, A\geq0$. 
	 The required conclusion follows.
\end{proof}

\begin{proof}[\bf The proof of Propositions \ref{pro:5}] (1) It is obvious that $$c=\Lambda_1\phi_1^4+\Lambda_2\phi_2^4+\Lambda_3\phi_1^2\phi_2^2\geq0,$$
	is equivalent to
	$$\Lambda_1\geq0, \Lambda_2\geq0,\Lambda_3+2\sqrt{\Lambda_1\Lambda_2}\geq0.$$ 
	(2) It follows from the equations \eqref{eq:4} that
	$$b+2a=2(\Lambda_6\phi_1^2 +\Lambda_7\phi_2^2+(\Lambda_4+\Lambda_5)\phi_1\phi_2)\phi_1\phi_2,$$
	$$2a-b=2(-\Lambda_6\phi_1^2 -\Lambda_7\phi_2^2+(\Lambda_4+\Lambda_5)\phi_1\phi_2)\phi_1\phi_2,$$
	and hence, $$\aligned b+2a\geq0\Leftrightarrow& \Lambda_6\phi_1^2 +\Lambda_7\phi_2^2+(\Lambda_4+\Lambda_5)\phi_1\phi_2\geq0,\\
	b\leq0\Leftrightarrow& \Lambda_6\phi_1^2 +\Lambda_7\phi_2^2\leq0.
	\endaligned$$
	So, we have
	$$\aligned b+2a\geq0\Leftrightarrow& \Lambda_6\geq0, \Lambda_7\geq0, (\Lambda_4+\Lambda_5)+2\sqrt{\Lambda_6\Lambda_7}\geq0,\\
	b\leq0\Leftrightarrow& \Lambda_6\leq0, \Lambda_7\leq0,
	\endaligned$$
	and hence, $ \Lambda_4+\Lambda_5\geq0$, $\Lambda_6=0, \Lambda_7=0$, i.e., $b=0, a\geq0$. 
	
	Similarly, (3) is also  showed easily.
\end{proof}

	\end{document}